\pgfplotsset{compat=1.14}
\newcolumntype{R}{>{\raggedleft\arraybackslash}X}
\newcolumntype{L}{>{\raggedright\arraybackslash}X}
\newcolumntype{C}{>{\centering\arraybackslash}X}
\newcolumntype{A}{>{\columncolor{gray!25}}C}
\newcolumntype{a}{>{\columncolor{gray!25}}c}
\newlength{\tablen}
\newcolumntype{.}{D{.}{.}{-1}}
\renewcommand\p@subfigure{\arabic{figure}.}
\renewcommand\p@subtable{\arabic{table}.}
\setlist[itemize]{leftmargin=2.5\parindent}
\setlist[enumerate]{leftmargin=2.5\parindent}
\theoremstyle{plain}
\newtheorem{proposition}{Proposition}
\theoremstyle{definition}
\newtheorem{definition}{Definition}
\theoremstyle{remark}
\def\keywords{\vspace{.5em} 
{\noindent \textit{Keywords}:\,}}
\def\JEL{\vspace{.5em} 
{\noindent \textbf{\emph{JEL} classification number}:\,}}
\def\AMS{\vspace{.5em} 
{\noindent \textbf{\emph{AMS} classification number}:\,}}
\author{\href{https://sites.google.com/site/laszlocsato87}{L\'aszl\'o Csat\'o}\thanks{~e-mail: laszlo.csato@uni-corvinus.hu} }
\affil{Institute for Computer Science and Control, Hungarian Academy of Sciences (MTA SZTAKI) \\
Laboratory on Engineering and Management Intelligence, Research Group of Operations Research and Decision Systems}
\affil{Corvinus University of Budapest (BCE) \\
Department of Operations Research and Actuarial Sciences}
\affil{Budapest, Hungary}
\title{Was Zidane honest or well-informed? \\
How UEFA barely avoided a serious scandal}
\date{\today}
\def\Dedication{ 
{\noindent $\mathfrak{M\ddot{o}gliche}$ $\mathfrak{Gefechte}$ $\mathfrak{sind}$ $\mathfrak{der}$ $\mathfrak{Folgen}$ $\mathfrak{wegen}$ $\mathfrak{als}$ $\mathfrak{wirkliche}$ $\mathfrak{zu}$ $\mathfrak{betrachten}$}

\vspace{0.25cm}
\flushright
\noindent (Carl von Clausewitz: \emph{Vom Kriege})
\vspace{1cm} 
\justify}
\begin{document}

\maketitle

\Dedication

\begin{abstract}
UEFA European Championship 1996 qualification is known to violate strategy-proofness. It has been proved recently that a team could be better off by exerting a lower effort: it might be optimal to concede some goals in order to achieve a better position among runners-up, and hence avoid a hazardous play-off.
We show that it is not only an irrelevant scenario with a marginal probability since France had an incentive to kick two own goals on its last match against Israel.

\JEL{C44, D71, L83} 

\AMS{91B14} 

\keywords{OR in sport; UEFA Euro 1996; ranking; strategy-proofness}
\end{abstract}


\section{Introduction} \label{Sec1}

Suppose you are \emph{\href{https://en.wikipedia.org/wiki/Zinedine_Zidane}{Zin\'edine Zidane}},\footnote{~Zin\'edine Yazid Zidane is a French retired attacking midfielder and the current manager of Real Madrid. He was named the best European footballer of the past 50 years in the UEFA Golden Jubilee Poll in 2004 and is regarded as one of the greatest players of all time.} a player of the French national association football team. You are playing against Israel in Group 1 of the 1996 UEFA European Championship qualification tournament, and your teammate Lizarazu has scored a goal recently, in the 89th minute of the match. What should you do? It will be revealed that the optimal course of action is to kick two goals -- in your own net!

According to our knowledge, there were two football matches where a team deliberately kicked an own goal because of some strange incentives.
Perhaps the most famous example is \href{https://en.wikipedia.org/wiki/Barbados_4\%E2\%80\%932_Grenada_(1994_Caribbean_Cup_qualification)}{Barbados vs Grenada (1994 Caribbean Cup qualification)}, played on January 27, 1994 \citep[Section~3.9.4]{KendallLenten2017}. In the qualifiers, each match must have a winner to be decided in a sudden-death 30 minutes extra time with a golden goal counting as two.
Barbados needed to win by two goals to progress, otherwise Grenada would advance.
Barbados had a 2-1 lead in the 87th minute, when the players realized that by scoring an own goal, they would have much more time remaining to win by two goals. After choosing this radical strategy, they succeeded in preserving the 2-2 result. Finally, Barbados have managed to score a goal in extra time and qualified.
FIFA did not penalize the Barbados Federation since they behaved according to the prevailing rules. Nevertheless, this controversial rule was never used again.

The second case was \href{https://en.wikipedia.org/wiki/1998_AFF_Championship\#Controversy}{Thailand vs Indonesia in the 1998 Tiger Cup} on 31 August 1998 \citep[Section~3.9.2]{KendallLenten2017}.
Both teams were already qualified for the semi-finals such that the group-winner would face hosting Vietnam, while the runner-up would play against Singapore, perceived to be the easier opponent, and would avoid moving the team's training base.
The score was 2-2 after 90 minutes implying Indonesia would be the group-winner. However, in extra time an Indonesian defender deliberately scored an own goal despite the Thai's attempts to stop him doing so.
FIFA fined both teams $40,000$ for '\emph{violating the spirit of the game}', furthermore, the Indonesian defender was banned from domestic football for one year and international football for life.

On the other hand, it has been shown recently that tournaments with subsequent group stages, where matches played in the preliminary round between players who qualify to the next round are carried over, suffer from such perverse incentives, and there existed at least two matches where a team had an incentive to win by few goals difference, possibly by throwing own goals \citep{Csato2017k}. However, we do not know about the use of this format in soccer.

Besides a careful analysis of historical events, we think it is also important to consider \emph{possible} scandals since rules may have a low probability to go awry, but the potential costs can be enormous. \citet{DagaevSonin2013}, \citet{DagaevSonin2017}, \citet{Csato2017h} and \citet{Csato2018b} have shown such hypothetical examples. \citet{Csato2017i} proved by a theoretical model that 1996 UEFA European Championship qualification was incentive incompatible, in other words, a team might be strictly better off by playing a draw (or even losing) instead of winning.
In the following, we will describe just such a situation at the first time, which occurred in the match France vs Israel, played on 15 November 1995.
Since the manipulation by France might have hurt a third team, it would be a more serious case than Barbados vs Grenada.

The rest of the paper is organized as follows. Section~\ref{Sec2} presents the qualification rules and outlines their theoretical model. The particular situation is detailed in Section~\ref{Sec3}, and we conclude in Section~\ref{Sec4}.

\section{Theoretical background} \label{Sec2}

\href{https://en.wikipedia.org/wiki/UEFA_Euro_1996_qualifying}{1996 UEFA European Championship qualification} divided $47$ teams into seven groups of six and one group of five teams. All teams played a home-and-away round-robin tournament in their groups. The group winners along with the six best runners-up qualified automatically, while the two worst runners-up were involved in a play-off.

This was the first time when the qualification for the European Football Championship awarded three points (instead of two) for a win. Tie-breaking rules in the groups were as follows:
\begin{enumerate}
\item
greater number of points obtained in all group matches (three points for a win, one for a draw and no points for a defeat);
\item
greater number of points obtained in the matches played between the teams concerned;
\item
superior goal difference in the matches played between the teams concerned;
\item
greater number of goals scored away from home in the matches played between the teams concerned.
\end{enumerate}
Further tie-breaking rules in the groups are not relevant for us. 

The runners-up were placed in a separate table to rank them such that only matches played against the first, third and fourth teams of the groups were regarded. Tie-breaking rules among the second-placed teams were as follows:
\begin{enumerate}
\item
greater number of points obtained in the group matches considered;
\item
superior goal difference in the group matches considered;
\item
greater number of goals scored in the group matches considered.
\end{enumerate}
Further tie-breaking rules are not relevant for us.

The difference of ranking in the groups and ranking among second-placed teams (from different groups) is responsible for the incentive incompatibility of the qualification. In the latter case, the matches played against the fifth, and -- in some groups -- against the sixth teams are discarded, therefore a runner-up can influence its performance measures among second-placed teams by changing the fourth and fifth teams in its group.
Due to the monotonicity of group ranking, a team cannot achieve a better position in its group by kicking an own goal but it may gain some goals or even points in the comparison of runners-up, provided that it remains the second.
Hence, it may qualify automatically for the final tournament as one of the six best second-placed teams, instead of playing a risky play-off match.

This argument can be applied in the theoretical analysis of a similar \emph{group-based qualification system} $\mathcal{Q}$ where participating teams are divided into $k$ groups such that the number of teams is $n_i$ in group $i =1, 2, \dots ,k$.
Furthermore, under a given set of match results $\mathbf{R}$, the top $a_i \geq 0$ teams in group $i$ directly qualify, while the next $b_i \geq 0$ teams -- whose set is denoted by $B_i(\mathbf{R})$ -- are compared in a so-called \emph{extra group}. The remaining $n_i - a_i - b_i \geq 0$ teams are eliminated.

Let $B(\mathbf{R}) = \sum_{i=1}^k B_i(\mathbf{R})$ be the set of teams of the extra group. In the extra group, only the group matches against the top $c_i$ teams are considered, where $a_i + b_i \leq c_i \leq n_i$. Suppose that direct qualification is preferred to advancing to play-offs, which is better than being eliminated. These three sets of teams are denoted by $T_1(\mathbf{R})$, $T_2(\mathbf{R})$, and $T_3(\mathbf{R})$, respectively.

Let $\mathcal{Q}$ be a group-based qualification system with the set of match results $\mathbf{R}$ and $x$ be a team. The set of match results $\mathbf{R_{x}'}$ is said to be \emph{disadvantageous} for team $x$, if $\mathbf{R}$ and $\mathbf{R_{x}'}$ are identical except for some opponents of team $x$ have scored more goals against it in $\mathbf{R_{x}'}$.

\begin{definition} \label{Def1}
\emph{Manipulation}:
Team $x$ can \emph{manipulate} the group-based qualification system $\mathcal{Q}$ under the set of match results $\mathbf{R}$ if there exists a set of match results $\mathbf{R_{x}'}$, which is disadvantageous for team $x$ such that one of the following holds:
(1) $x \in T_3(\mathbf{R})$ but $x \in T_1(\mathbf{R_{x}'}) \cup T_2(\mathbf{R_{x}'})$; or
(2) $x \in T_2(\mathbf{R})$ but $x \in T_1(\mathbf{R_{x}'})$.
\end{definition}

In other words, team $x$ is strictly better off after its manipulation despite the set of match results is disadvantageous for it.

This setting may allow for incentive incompatibility.

\begin{proposition} \label{Prop1}
Let $\mathcal{Q}$ be a group-based qualification system. There exists a set of match results $\mathbf{R}$ and a team $x \in B(\mathbf{R})$ that can manipulate the qualification under $\mathbf{R}$ if:
\begin{itemize}
\item
the number of groups is at least $k \geq 2$;
\item
there is a difference in the allocation of teams in the extra group, that is, at least two of the sets $B(\mathbf{R}) \cap T_1(\mathbf{R})$, $B(\mathbf{R}) \cap T_2(\mathbf{R})$, and $B(\mathbf{R}) \cap T_3(\mathbf{R})$ are non-empty;
\item
there are more teams than $n_i - c_i \geq 1$ behind team $x$ in its group $i$.
\end{itemize}
\end{proposition}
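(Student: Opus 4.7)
The assertion is existential, so the natural route is to exhibit an explicit set of match results $\mathbf{R}$ and a team $x\in B(\mathbf{R})$ witnessing the manipulation. I would take $k=2$ groups (the minimum allowed by the first hypothesis) and handle the case in which $B(\mathbf{R})\cap T_1(\mathbf{R})$ and $B(\mathbf{R})\cap T_2(\mathbf{R})$ are both non-empty. The remaining configurations allowed by the second hypothesis are treated analogously: if $T_3$ is one of the two non-empty sets, the construction is the same up to relabelling of the outcome classes, since the manipulation in Definition~\ref{Def1} requires only that $x$ strictly improve its outcome.

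Fix the team $x$ as a runner-up of some group $i$ in which, by the third hypothesis, $n_i-c_i\ge 1$; let $y$ be a team in group $i$ ranked strictly below the $c_i$-th place, so that matches between $x$ and $y$ are \emph{not} counted in the extra group, and let $z$ be the team currently holding the $c_i$-th place, so that matches between $x$ and $z$ \emph{are} counted. The design principle is that $x$'s result against $y$ should be more favourable than its result against $z$, and that $y$ should sit just behind $z$ in the group-$i$ ranking, separated by a narrow tiebreaker. Fill in the remaining matches of group $i$ so that $x$ enjoys a comfortable points cushion over both $y$ and $z$, and fill in the other group so that its runner-up $w$ is placed in $T_1$ while $x$ is edged into $T_2$ when the extra-group metrics use the unfavourable $x$--$z$ result.

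The manipulation is to build $\mathbf{R}'_{x}$ by letting $y$ score additional goals against $x$, just enough to leapfrog $z$ in group $i$. I would then verify in order: (i) $\mathbf{R}'_{x}$ is disadvantageous for $x$ in the sense of the definition, since only $y$'s tally against $x$ changes and it only increases; (ii) $x$ still occupies second place in group $i$, so $x\in B(\mathbf{R}'_{x})$, using the points cushion arranged above; (iii) under $\mathbf{R}'_{x}$ the top-$c_i$ set of group $i$ swaps $y$ in for $z$, so the (still favourable) $x$--$y$ entry replaces the $x$--$z$ entry in the extra group; (iv) this swap is enough to overtake $w$ on the extra-group tiebreakers, so $x\in T_1(\mathbf{R}'_{x})$.

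The main obstacle is the arithmetic balancing in steps (ii)--(iv): the same concession simultaneously improves $y$'s standing, erodes $x$'s margin in group $i$, and erodes the very $x$--$y$ record that is about to start being counted. The clean way to manage these competing effects is to make the original $x$--$z$ result a heavy loss and the original $x$--$y$ result a comfortable win, to tie $z$ and $y$ on points in group $i$ so that a single additional conceded goal suffices to flip their tiebreaker without materially damaging the $x$--$y$ margin, and to source $x$'s cushion from its matches against the remaining teams of group $i$ so that those matches — which are unaffected by the manipulation — insulate $x$'s second place. With these choices the inequalities in (ii)--(iv) reduce to a direct comparison of small integers, and the construction goes through.
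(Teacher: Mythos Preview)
Your proof plan is correct and follows exactly the mechanism the paper identifies: a runner-up concedes goals so as to swap the team occupying position $c_i$ with a team just below it, thereby replacing an unfavourable counted match with a more favourable one in the extra-group comparison, while its own second place is protected by an independent points cushion. The paper's own proof is essentially a pointer to Theorem~1 of the cited companion paper together with this same intuition; your sketch supplies the explicit two-group construction and the arithmetic balancing that the paper delegates to that external reference, so the two approaches coincide.
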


\begin{proof}
It is covered by the model described in \citet{Csato2017i}, specifically, see \citet[Theorem~1]{Csato2017i}.
The proof is based on the idea presented above, that is, a team can improve its position in the extra group -- by influencing its set of matches taken in the extra group into account -- through conceding some goals, which is impossible if all matches against teams behind team $x$ are discarded or counted in the extra group. It makes the last condition of Proposition~\ref{Prop1} necessary.
The second requirement guarantees that achieving a better position among the teams of the extra group is effective with respect to qualification.
\end{proof}

Further discussion of Proposition~\ref{Prop1} and conditions providing strategy-proofness can be found in \citet{Csato2017i}.

\section{The potential scandal} \label{Sec3}

1996 UEFA European Championship qualification is covered by the model presented in Section~\ref{Sec2}. The top team directly qualified to the tournament ($a_i = 1$), while the next team were placed into an extra group for each group $i = 1,2, \dots ,8$ ($b_i = 1$).
The top six runners-up also qualified, however, the two worst were involved in a play-off. Consequently, $\left| B(\mathbf{R}) \cap T_1(\mathbf{R}) \right| = 6$ and $\left| B(\mathbf{R}) \cap T_2(\mathbf{R}) \right| = 2$.
Finally, $c_i = 4$ and $n_i = 6$ with the exception of $n_3 = 5$, so there are two teams more than $n_i - c_i$ behind any second-placed team (the third and the fourth) in its group $i$.
Hence the conditions of Proposition~\ref{Prop1} are satisfied, implying the incentive incompatibility of the qualification: there exists a set of match results under which an arbitrary runner-up can manipulate it.

Now we will show that the qualification tournament was very close to such a situation.

\begin{table}[ht]
\begin{threeparttable}
\centering
\caption{1996 UEFA European Championship qualification -- Group 1: Final standing before the last match France against Israel}
\label{Table1}
\rowcolors{1}{}{gray!20}
    \begin{tabularx}{\linewidth}{Cl CCCC CC >{\bfseries}C} \toprule \showrowcolors
    Pos   & Team  & W     & D     & L     & GF    & GA    & GD    & Pts \\ \hline
    1     & Romania & 6     & 3     & 1     & 18    & 9     & 9     & 21 \\
    2     & France & 4     & 5     & 0     & 20    & 2     & 18    & 17 \\
    3     & Slovakia & 4     & 2     & 4     & 14    & 18    & -4    & 14 \\
    4     & Poland & 3     & 4     & 3     & 14    & 12    & 2     & 13 \\
    5     & Israel & 3     & 3     & 3     & 13    & 11    & 2     & 12 \\
    6     & Azerbaijan & 0     & 1     & 9     & 2     & 29    & -27   & 1 \\ \bottomrule  
    \end{tabularx}
    
    \begin{tablenotes}
    \item
\footnotesize{Pos = Position; W = Won; D = Drawn; L = Loss; GF = Goals for; GA = Goals against; GD = Goal difference; Pts = Points. Romania, Slovakia, Poland and Azerbaijan have played 10, France and Israel 9 matches.}
    \end{tablenotes}
\end{threeparttable}
\end{table}

Table~\ref{Table1} shows the standing of \href{https://en.wikipedia.org/wiki/UEFA_Euro_1996_qualifying_Group_1}{Group 1} before the last match of France against Israel, played on 15 November 1995 at 20:45 according to Central European Time in Caen, France. The other two matches of Group 1 in the last matchday were started at 16:00 and 17:30 on the same day, so Table~\ref{Table1} can be assumed as common knowledge at the beginning of this particular match. France led by a score of 2-0 at the 89th minute. What were the options of the team?

\begin{table}[ht]
\begin{threeparttable}
\centering
\caption{1996 UEFA European Championship qualification -- Group 1: Match results}
\label{Table2}
\rowcolors{1}{}{gray!20}
    \begin{tabularx}{\linewidth}{cl CCC CCC} \toprule
    Position   & Team      & 1     & 2     & 3     & 4     & 5     & 6 \\ \hline
    1     & Romania & ---     & 1-3   & 3-2   & 2-1   & 2-1   & 3-0 \\
    2     & France & 0-0   & ---     & 4-0   & 1-1   & \textbf{?}   & 10-0 \\
    3     & Slovakia & 0-2   & 0-0   & ---     & 4-1   & 1-0   & 4-1 \\
    4     & Poland & 0-0   & 0-0   & 5-0   & ---     & 4-3   & 1-0 \\
    5     & Israel & 1-1   & 0-0   & 2-2   & 2-1   & ---     & 2-0 \\
    6     & Azerbaijan & 1-4   & 0-2   & 0-1   & 0-0   & 0-2   & --- \\ \bottomrule
    \end{tabularx}
    
    \begin{tablenotes}
    \item
\footnotesize{Position is given without the match France against Israel \\
Home team is in the row, away team (represented by its position) is in the column}
    \end{tablenotes}
\end{threeparttable}
\end{table}

France is guaranteed to be the second (it would have at least $17$ and at most $20$ points), so players should aim to achieve a better position among the runners-up, where matches played against the fifth (Poland or Israel) and the sixth (Azerbaijan) teams are discarded. If France defeats Israel, the latter would get the fifth place in Group 1, therefore France has $10$ points, $8$ goals for and $2$ goals against (a goal difference of $+6$) among the second-placed teams, according to Table~\ref{Table2}.

However, if France concedes two goals, Israel would be the fourth due to its $13$ points and better head-to-head result against Poland (see Table~\ref{Table2}): they have $3$ points and a goal difference of zero against each other, but Israel scored $3$ goals away, while Poland scored only $1$ goal away. Consequently, France would have $10$ points, $9$ goals for and $3$ goals against (a goal difference of $+6$) among the second-placed teams.
Since the two worst runners-up are involved in a play-off for qualification, but the first six directly qualify, it is strictly better to have the same number of points and the same goal difference with more goals scored in the ranking of second-placed teams.
Zidane should have kicked two own goals, or have agreed with his teammates to make some mistakes in defence as a draw of 2-2 is also preferred by Israel.

Our analysis is based only on information available at the time of the match France against Israel, so all coaches and players could have recognized the strange situation. However, probably the current paper gives the first description of this observation.

\section{Conclusions} \label{Sec4}

The actual result of the match France vs Israel remained 2-0.
Perhaps the players were honest, despite the risk involved in not conceding two goals.
Maybe they were well-informed: the difference between the two scenarios was marginal and could not influence whether France was among the top six runners-up qualifying automatically, although some matches were played later. 

Nevertheless, it makes no sense to deny that the shadow of a major outrage floated above 1996 UEFA European Championship qualification, identified to be the first incentive incompatible qualifying for UEFA European Championships \citep{Csato2017i}.
We hope the presented example is a clear warning for football governing bodies to organize strategy-proof qualifications in the future. They can follow the mechanism suggested by \citet{Csato2017i} to guarantee this crucial property.


\section*{Acknowledgements}
\addcontentsline{toc}{section}{Acknowledgements}
\noindent
A reviewer provided valuable comments and suggestions on earlier drafts. \\
We are indebted to the \href{https://en.wikipedia.org/wiki/Wikipedia_community}{Wikipedia community} for contributing to our research by collecting information used in the paper. \\
The research was supported by OTKA grant K 111797 and by the MTA Premium Post Doctorate Research Program. 


\end{document}